\definecolor{lime}{HTML}{A6CE39}
\DeclareRobustCommand{\orcidicon}{%
	\begin{tikzpicture}
		\draw[lime, fill=lime] (0,0) 
		circle [radius=0.16] 
		node[white] {{\fontfamily{qag}\selectfont \tiny ID}};
		\draw[white, fill=white] (-0.0625,0.095) 
		circle [radius=0.007];
	\end{tikzpicture}
	\hspace{-2mm}
}
\xdef\csname orcid\x\endcsname{\noexpand\href{https://orcid.org/\csname orcidauthor\x\endcsname}{\noexpand\orcidicon}}
\newtheorem{theorem}{Theorem}
\newtheorem{lemma}{Lemma}
\newtheorem{corollary}{Corollary}
\newtheorem{definition}{Definition}
\newtheorem{remark}{Remark}
\title{On Streaming Algorithms for Geometric Independent Set and Clique}
\author[1]{Sujoy Bhore\orcidA{}}
\author[2]{Fabian Klute\orcidB{}\footnote{Supported by the FWF Austrian Science Foundation grant J-4510.}}
\author[3]{Jelle J. Oostveen\footnote{Partially supported by the NWO grant OCENW.KLEIN.114 (PACAN).}}
\affil[1]{Indian Institute of Science Education and Research, Bhopal, India\\ 
\texttt{sujoy.bhore@gmail.com}}
\affil[2]{Utrecht University, Utrecht, the Netherlands\\\authorcr
\mbox{\texttt{\{f.m.klute,j.j.oostveen\}@uu.nl}}}
\newcommand{\mc}[1]{\ensuremath{\mathcal #1}\xspace}
\newcommand{\inter}[1]{\ensuremath{\mathcal G(#1)}\xspace}
\newcommand{\indset}{maximum independent set\xspace}
\newcommand{\clique}{maximum clique\xspace}
\newcommand{\geomindset}{geometric \indset}
\newcommand{\geomclique}{geometric \clique}
\newcommand{\Indset}{Maximum independent set\xspace}
\begin{document}

\maketitle

\begin{abstract}
    We study the maximum geometric independent set and clique problems in the streaming model.
    Given a collection of geometric objects arriving in an insertion only stream, the aim is to find a subset such that
    all objects in the subset are pairwise disjoint or intersect respectively.
    
    We show that no constant factor approximation algorithm exists
    to find a maximum set of independent segments or $2$-intervals without
    using a linear number of bits.
    Interestingly, our proof only requires a set of segments whose intersection 
    graph is also an interval graph.
    This reveals an interesting discrepancy between segments and intervals
    as there does exist a $2$-approximation for finding an independent set of intervals
    that uses only $O(\alpha(\mc I)\log |\mc I|)$ bits of memory for
    a set of intervals $\mc I$ with $\alpha(\mathcal I)$ being the size of the largest independent set of $\mc I$.
    On the flipside we show that for the geometric clique problem there is no constant-factor approximation
    algorithm using less than a linear number of bits even for unit intervals.
    On the positive side we show that the maximum geometric independent set in a set of 
    axis-aligned unit-height rectangles can be $4$-approximated using only $O(\alpha(\mc R)\log |\mc R|)$ bits.
    
\end{abstract}

\section{Introduction}
The independent set problem is one of the fundamental combinatorial optimization problems in theoretical computer science, with a wide range of applications. Given a graph $G=(V,E)$, a set of vertices $M \subset V$ is \emph{independent} if no two vertices in $M$ are adjacent in $G$. 
A \emph{maximum independent set} is a maximum cardinality independent set. \Indset is one of the most well-studied algorithmic problems and is one of Karp’s 21 classic \textsf{NP}-complete problems~\cite{Karp72}. Moreover, it is well-known to be hard to approximate: no polynomial time
algorithm can achieve an approximation factor $n^{1-\epsilon}$, 
for any constant $\epsilon>0$, unless \textsf{P} = \textsf{NP}~\cite{Hastad96,Zuckerman07}. 
\Indset serves as a natural model for many real-life optimization problems, including map labeling, computer vision, information retrieval, and scheduling; see~\cite{AgarwalKS98,BalasY86,BevernMNW15}.

\paragraph{Geometric Independent Set.}
In the geometric setting we are given a set of geometric objects $\mc S$, and we say a subset $\mc S' \subseteq \mc S$ is \emph{independent} if no two objects in $\mc S'$ intersect and
we say $\mc S'$ is a \emph{clique} if every two objects pairwise intersect.
Let $\alpha(\mc S)$ be the cardinality of the largest subset $\mc S' \subseteq \mc S$ such that 
$\mc S'$ is an independent set and
$\omega(\mc S)$ the cardinality of the largest subset of $\mc S' \subseteq \mc S$ such that $\mc S'$ is a clique.
The \emph{geometric maximum independent set} and \emph{geometric maximum clique} problem ask for a set $S\subseteq \mc S$ of independent objects (that induce a clique) such that $S = \alpha(\mc S)$ ($S=\omega(\mc S)$).

Given a set $\mathcal{S}$ of geometric objects, we define the \emph{geometric intersection graph}
$\mc G(S)$ as the simple graph in which each vertex corresponds to an object in $\mc S$
and two vertices are connected by an edge if their corresponding objects intersect.

Stronger results are known for the \geomindset problem is known in comparison to general graphs. A fundamental problem is the 1-dimensional case, where all objects are intervals. This problem is also known as \emph{interval selection}
problem which has applications to scheduling and resource allocation~\cite{Bar-NoyBFNS01}.
For intervals \geomindset can be solved in $O(n\log n)$ time, by a simple greedy algorithm that sweeps the line from left to right and at each step picks the interval with the leftmost right endpoint, see e.g.~\cite{KT06}. 
In contrast, the \geomindset problem is already \textsf{NP}-hard for 
sets of segments in the plane using only two directions~\cite{kratochvil1990independent},
or $2$-intervals~\cite{DBLP:journals/siamcomp/Bar-YehudaHNSS06}.
For some restricted classes of segment intersection graphs,
such as permutation~\cite{DBLP:journals/networks/Trotter83} and circle graphs~\cite{DBLP:journals/networks/Gavril73},
the \geomindset problem can be solved in polynomial time.
Efficient approximation algorithms exist for example for unit square intersection graphs~\cite{ErlebachJS05} and more generally for pseudo disks~\cite{ChanH12}, 
as well as segments~\cite{DBLP:journals/comgeo/AgarwalM06,fox2011computing}.
In a recent breakthrough work~\cite{Mitchell21}, it was shown that there exists a constant-factor approximation scheme for \indset for a set of axis-aligned rectangles.
Very recently, this factor was improved to $3$~\cite{GalvezKMMPW22}.
Also, the \geomindset problem has been extensively studied for dynamic geometric objects, i.e., objects can be inserted and deleted~\cite{BCIK20,CardinalIK21,CMR20,GavruskinKKL15,Henzinger0W20}.

\paragraph{Streaming algorithms.}
In this paper, we study the \geomindset and \geomclique problems for \emph{insertion only streams} of geometric objects. 
In the streaming model we consider data that is too large to fit at once into the working memory.
Instead the data is dealt with in a data stream in which we receive the elements of the input one after another in no specific order and
have access to only a limited amount of memory.
More specifically, in this model, we have bounds on the amount of available memory. As the data arrives sequentially, and we are not allowed to look at input
data of the past, unless the data was stored in our limited memory. This is effectively equivalent to assuming that we can only make one or a few passes over the input data.
We refer to~\cite{DBLP:journals/sigmod/McGregor14,muthukrishnan2005data} and the lecture notes of Chakrabati~\cite{chakrabartics49} for an overview on the general topic of streaming algorithms. 

For maximum independent set Halld\'{o}rsson et al.~\cite{HalldorssonHLS10} studied the problem for graphs and hypergraps in linear space in the semi-streaming model: Their model work in poly-logarithmic space, like in the case of the classical streaming model, but they can access and update the output buffer, treating it as an extra piece of memory. Kane et al.~\cite{KaneNW10} gave the first optimal algorithm for estimating the number of distinct elements in a data stream.

Streaming algorithms for geometric data have seen a flurry of results in recent years; see~\cite{AgarwalKMV03,ChenJLW22,CZKV20,FrahlingS05,Indyk04}. 
Note that one can also view a stream of geometric objects $\mc S$ as a vertex stream,
also called \emph{implicit vertex stream},
of its associated geometric intersection graph $\inter{\mc S}$~\cite{cormodeIndependentSets2019}.
Finding an independent, i.e. disjoint, set of geometric objects in a data stream has  been among the most studied problems in this geometric direction. 
Emek et al.~\cite{emekSpaceConstrainedInterval2016} studied the interval selection problem, where the input is a set of intervals $\mathcal{I}$ with real endpoints, and the objective is to find an independent subset of largest cardinality. They studied the interval selection problem using $O(\alpha(\mathcal{I}))$ space. They presented a $2$-approximation algorithm for the case of arbitrary intervals and a $(3/2)$-approximation for the case of unit intervals, i.e., when all intervals have the same length.
These bounds are also known to be the best possible~\cite{emekSpaceConstrainedInterval2016}.
Cabello et al.~\cite{cabelloIntervalSelection2017} studied the question of estimating $\alpha(\mc I)$ for a set $\mc I$ of intervals and gave simpler proofs of the algorithms presented by Emek et al.~\cite{emekSpaceConstrainedInterval2016}.


Cormode et al.~\cite{cormodeIndependentSets2019} considered unit balls in the $L^1$ and $L^\infty$ norms, i.e. squares in $\mathbb R^2$. 
For a set of such unit balls $\mc B$ they obtained a $3$-approximation using $O(\alpha(\mc B))$ space and 
show that there is no $\frac 52 - \varepsilon$ approximation using $o(|\mc B|)$ space.
Finally, Bakshi et al.~\cite{DBLP:conf/approx/BakshiCW20} considered Turnstile streams, i.e, deletion is also allowed, of (weighted) unit intervals and disks.


\subsection{Our Results}
In this paper we investigate several geometric objects that have not been 
studied in the context of streaming algorithms. 
%
We show in Section~\ref{sec:segments} that there is no constant-factor approximation in the streaming model
for finding an independent set of $n$ segments using $o\left(\frac np\right)$ bits of memory
for any constant number $p$ of passes and
this bound holds even if the endpoints of the segments are on two parallel lines.
In other words, our bound holds even when the geometric intersection graph of the segments is a permutation graph.

Our construction leads to an interesting consequence.
Namely, the intersection graph created in our reduction is not only a permutationa graph, but also an interval graph and
the cardinality of its maximum independent set is not dependent on the input size.
Since there exists a $2$-approximation algorithm
for geometric independent set of a set of intervals $\mathcal I$ in the streaming model that
uses only $O(\alpha(\mathcal I))$ space
this implies that there is a difference between an interval graph being streamed as a set of intervals or as a set of segments.
We discuss this implication in Section~\ref{sec:intvssegs}.
In Section~\ref{sec:cintervals} we show that for streams of $2$-intervals 
there is no one-pass algorithm that achieves a constant-factor approximation using less than $o(n)$ bits.
On the positive side we show in \ref{sec:rectangles} that 
for $n$ axis-aligned unit height rectangles there exists a one pass streaming algorithm
achieving a $4$-approximation of the largest set of disjoint rectangles using $O(\alpha(\mathcal R)\log n)$ bits.

Finally, we show in Section~\ref{sec:clique} that the distinction between segments and intervals 
observed for the geometric independent set problem does not occur for the same objects in the geometric clique problem by
showing that there does not exist a $p$-pass algorithm using less than $o\left(\frac np\right)$ bits of memory and
achieving a constant-factor approximation of the geometric clique problem in streams of $n$ unit intervals.
We complement this hardness result by showing how to obtain an exact solution 
for the geometric clique problem in streams of $n$ intervals using only $O(n \log\omega(\mathcal I))$ bits of memory.


\section{Independent Sets in Streams of Segments}
\label{sec:segments}
In this section we establish our lower bound for the memory necessary
to approximate the \indset problem to any constant factor on streams of segments.
We employ a lower bound reduction technique that uses multi-party set disjointness, which gives us space bounds not only for single-pass algorithms,
but also for multi-pass algorithms. 
The following problem was first studied by Alon, Matias, and Szegedy~\cite{AlonSpaceComplexity}.

\begin{definition}[\textsc{Multi-Party Set Disjointness}]
    There are $t$ players $P_1,\ldots,P_t$. Each player $P_i$ has a size $n$ bit string $x^i$. The players want to find out if there is an index $j\in [n]$ where $x^i_j = 1$ for all $i$. So, $\textsc{Disj}_{n,t}(x^1,\ldots,x^t) = \bigvee_{j=1}^n \bigwedge_{i=1}^t x^i_j$.
\end{definition}

In our proof we are going to make use of the following result on the communication complexity of \textsc{Multi-Party Set Disjointness}.

\begin{theorem}[Chakrabarti et al.~\cite{CharkabartiSetDisjointness}]\label{thm:DisjLowerbound}
    For an error probability $0 < \delta < 1/4$, to decide $\textsc{Disj}_{n,t}$ the players need $\Omega(\frac{n}{t\log t})$ bits of communication, even for a family of instances $(x^1,\ldots,x^t)$ satisfying the following properties
    \begin{align}
        |\{ j : x^i_j = 1\}| &= n/2t& \quad \forall i\in [t] \label{property:n/2t1s}\\
        |\{ i : x^i_j = 1\}| &\in \{0,1,t\}& \quad \forall j \in [n] \label{property:1ortoverlap}\\
        |\{ j : |\{ i : x^i_j = 1\}| = t \}| &\leq 1& \label{property:leq1overlap}
    \end{align}
\end{theorem}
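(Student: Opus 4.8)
The plan is to prove the bound through the \emph{information complexity} method; since information-cost lower bounds compose across passes, this is also precisely what makes the theorem applicable to multi-pass streaming reductions afterwards. First I would fix a hard input distribution $\mu$ supported inside the promised family. Concretely: independently for each coordinate $j\in[n]$ sample an ``owner'' $o(j)\in[t]$ uniformly; a NO-instance puts $x^{o(j)}_j=1$ and $x^i_j=0$ for $i\neq o(j)$, while a YES-instance additionally designates one coordinate $j^\star$ and sets $x^i_{j^\star}=1$ for every $i$. Conditioning on each player owning exactly $n/2t$ ones enforces property~\eqref{property:n/2t1s}; every coordinate then realizes exactly one of the patterns $\{0,1,t\}$, giving~\eqref{property:1ortoverlap}; and at most one all-ones coordinate is ever planted, giving~\eqref{property:leq1overlap}. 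Any protocol that is $\delta$-correct on this family is in particular $\delta$-correct on a constant fraction of $\mu$.

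Second, I would replace communication by information: for any protocol $\Pi$ its worst-case transcript length dominates the information cost $\mathrm{IC}_\mu(\Pi)=\sum_{i\in[t]} I(X^i;\Pi\mid X^{-i})$, so it suffices to lower-bound $\mathrm{IC}_\mu(\Pi)$ over all correct protocols. Since the $n$ coordinates are independent under $\mu$ and $\textsc{Disj}_{n,t}$ is the coordinate-wise $\mathrm{OR}$ of the $t$-ary $\mathrm{AND}$ function, a direct-sum argument yields $\mathrm{IC}_\mu(\textsc{Disj}_{n,t}) \ge n\cdot \mathrm{IC}_\nu(\mathrm{AND}_t)$, where $\nu$ is the one-coordinate marginal of $\mu$; the embedding simply runs the $\textsc{Disj}_{n,t}$ protocol on one genuine coordinate padded by $n-1$ coordinates the players sample from private randomness, so no extra communication is incurred.

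The main obstacle is the single-coordinate estimate $\mathrm{IC}_\nu(\mathrm{AND}_t) = \Omega\!\left(\tfrac{1}{t\log t}\right)$. I would obtain it by a cut-and-paste argument phrased in statistical distance: if, for every player $i$, the transcript distribution moves only slightly (in squared Hellinger distance) when $x^i$ flips from $0$ to $1$ with the other bits pinned at $0$, then these $t$ nearly-identical distributions can be spliced together to manufacture a transcript that the protocol accepts with non-negligible probability on the all-ones input as well, contradicting correctness on YES-instances. Quantifying ``slightly'' through a Pythagorean-type inequality for Hellinger distance, and translating Hellinger distance back into mutual information, produces the per-coordinate cost; the stray $\log t$ factor comes from a pruning step (a union bound over the heavy nodes of the protocol's communication tree) needed to make the splice work simultaneously for all $t$ players. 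Combining this estimate with the direct sum gives total communication $\Omega\!\left(\tfrac{n}{t\log t}\right)$, and because every step was run on a distribution that lives inside the promised family, the bound survives the restriction to instances obeying~\eqref{property:n/2t1s}--\eqref{property:leq1overlap}. This essentially recovers the argument of~\cite{CharkabartiSetDisjointness}.
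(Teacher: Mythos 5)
This statement is an imported result: the paper cites it to Chakrabarti et al.\ and does not reprove it, so there is no in-paper proof for your sketch to be compared against. What you have written is, at a high level, a recognizable summary of the standard information-complexity route (hard product distribution, direct sum to a single coordinate, Hellinger/cut-and-paste bound for $\mathrm{AND}_t$) that underlies the cited reference, and the remarks about why the bound lives inside the promised family and why information-cost bounds behave well under multiple passes are apt.

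That said, there is a concrete slip in the hard distribution as you describe it. You say: independently for each coordinate $j$ pick a uniform owner $o(j)\in[t]$ and, in a NO-instance, set $x^{o(j)}_j=1$ and $x^i_j=0$ for $i\neq o(j)$. Under that rule every coordinate contributes exactly one $1$, so $\sum_i |\{j : x^i_j=1\}| = n$. But property~\eqref{property:n/2t1s} requires $|\{j : x^i_j=1\}| = n/2t$ for every $i$, which forces $\sum_i |\{j : x^i_j=1\}| = n/2$. Conditioning on each player owning exactly $n/2t$ ones is therefore conditioning on an impossible event. The fix is well known and what the references actually do: roughly half the coordinates must be all-zero. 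For instance, independently for each $j$ let it be all-zero with probability $1/2$ and otherwise assign a uniform owner; then each player's count is $\mathrm{Binomial}(n,1/2t)$ with mean $n/2t$, and you can condition on the counts hitting $n/2t$ exactly (this conditioning also keeps the per-coordinate marginals manageable for the direct-sum step). With this correction, and a careful statement of what ``conditional information complexity'' means when the coordinates are conditioned on the owner variables (the standard device from BJKS/CKS), your outline matches the argument that the paper is implicitly relying on.
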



We can use Theorem~\ref{thm:DisjLowerbound} by having $t$ players use a streaming algorithm to answer $\textsc{Disj}_{n,t}$. The players construct the stream by creating some part of the stream and giving it to the algorithm, and then passing the memory state of the algorithm to the next player who does the same. 
This way, the space used by the algorithm must abide to the lower bound on the communication between the players. 
We can use the $t$ players (rather than just 2) to create a bigger gap between the yes and no answer, excluding the possibility for any constant factor approximation algorithms.

\begin{figure}[t]
    \centering
    \includegraphics[page=2]{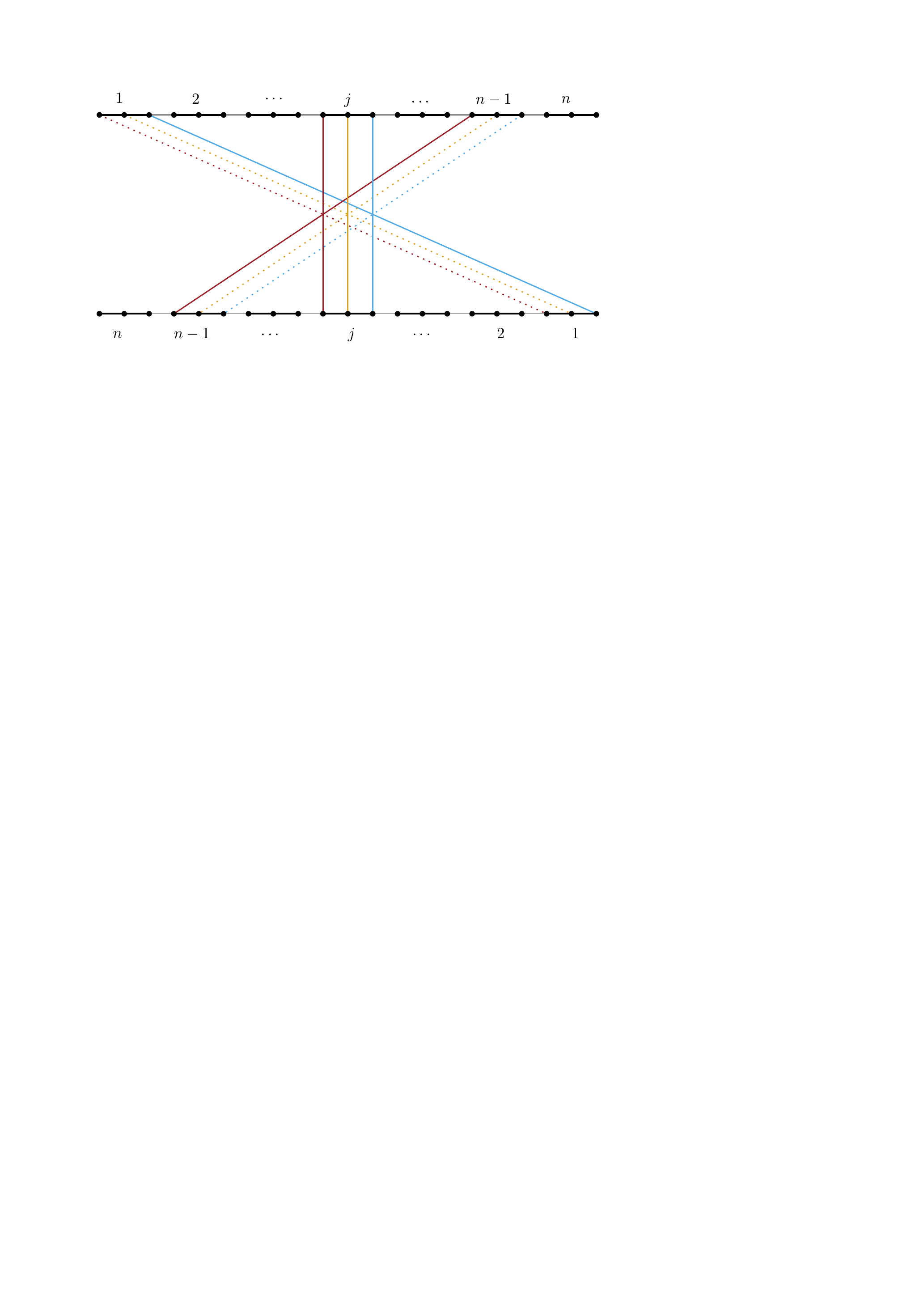}
    \caption{Lower bound for Independent Set in permutation graphs, with $t = 3$ players. Here the independent set has size $t$, by the $j$-th group, where $x^i_j=1$ for all $i\in[t]$.}
    \label{fig:IS_Perm_lowerbound}
\end{figure}


\begin{lemma}\label{lemma:ISlowerboundPerm}
    For any $p\geq 1, t \geq 2$, any algorithm for \geomindset that can distinguish between an independent set of size $1$ and $t$ and succeeds with probability at least $3/4$ on segment streams using $p$ passes must use at least $\Omega(\frac{n}{p \cdot t\log t})$ bits of memory, even when the segment endpoints lie on two parallel lines.
\end{lemma}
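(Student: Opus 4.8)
The plan is to reduce \textsc{Multi-Party Set Disjointness} with the promise conditions of Theorem~\ref{thm:DisjLowerbound} to the streaming \geomindset problem on segment streams, using the $t$ players as the agents in the streaming-to-communication simulation sketched above. First I would fix a universe index set $[n]$ and assign to each index $j\in[n]$ a vertical ``slot'' at $x$-coordinate $j$ (or a narrow band of $x$-coordinates), and to each player $i\in[t]$ a horizontal ``level'' between the two parallel lines $y=0$ and $y=1$. The key construction: when player $P_i$ reads a $1$ in position $j$ of its string $x^i$, it inserts into the stream a segment whose upper endpoint lies on the line $y=1$ and lower endpoint on $y=0$, placed so that (a) within the same slot $j$, the $t$ possible segments (one per player) are pairwise \emph{intersecting} — e.g. by making them all pass through a common point, or by having their endpoints interleaved — and (b) segments from different slots $j\neq j'$ are pairwise \emph{disjoint}. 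A clean way to get (b) is to confine slot $j$ to the vertical strip $x\in[j-\tfrac14,j+\tfrac14]$, say, and route every segment of slot $j$ so that it stays inside that strip; since the strips are disjoint, inter-slot segments never meet. For (a), within slot $j$ I would place the endpoints of player $i$'s segment at $(j + a_i \epsilon, 1)$ on top and $(j + b_i\epsilon, 0)$ on the bottom with the orderings of the $a_i$ and $b_i$ chosen to force every pair to cross inside the strip (an interleaved/permutation pattern does this).

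Next I would argue correctness of the reduction. By property~\eqref{property:1ortoverlap}, for each $j$ the number of players with $x^i_j=1$ is $0$, $1$, or $t$; so each nonempty slot contains either exactly one segment or exactly $t$ mutually intersecting segments, contributing $1$ to any independent set in the first case and again only $1$ in the second (since they pairwise intersect we may take at most one). Combined with the inter-slot disjointness, the size $\alpha$ of a maximum independent set in $\inter{\mathcal S}$ equals the number of nonempty slots. If $\textsc{Disj}_{n,t}$ is a NO-instance there is no common index, so no slot has all $t$ players, and — here I use the promise more carefully — I want $\alpha$ to be \emph{large}, around $t$ or more, in the NO case and \emph{small} in the YES case, or the other way around. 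Re-examining: in a YES-instance there is (by~\eqref{property:leq1overlap}) exactly one ``full'' slot with $t$ segments, and this is the only structural difference. To turn this single-slot difference into a factor-$t$ gap, I would instead make the \emph{full} slot the only one that contributes, e.g. by reversing the intersection pattern: within a slot, make the segments pairwise \emph{disjoint} when they belong to the ``full'' configuration and... no — cleaner is to keep the above but add a fixed gadget. Actually the simplest fix matching the lemma's ``size $1$ vs size $t$'' phrasing: arrange that non-full slots' segments pairwise intersect \emph{across} slots too except the full slot, so that the whole instance has independent set of size $1$ unless a full slot exists, in which case the $t$ segments of that slot are placed to be mutually \emph{independent} (their common-index status is what we detect), giving independent set size exactly $t$. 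I would realize ``all non-full segments pairwise intersect'' by routing them all through one common pivot point, and ``the $t$ segments of the (unique, by~\eqref{property:leq1overlap}) full slot are mutually disjoint and also disjoint from that pivot cluster'' by placing them in a reserved strip with nested/parallel endpoints.

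Then the simulation: player $P_1$ runs the streaming \geomindset algorithm on the segments it generates for the indices $j$ with $x^1_j=1$, passes the $O(\text{space})$-bit memory state to $P_2$, and so on; after the last player, the algorithm's approximate answer distinguishes independent-set size $1$ from $t$ (any approximation ratio $<t$ suffices), which by the above decides $\textsc{Disj}_{n,t}$. For $p$ passes the state is forwarded around the $t$ players $p$ times, so the total communication is $O(p \cdot t \cdot \mathrm{space})$; by Theorem~\ref{thm:DisjLowerbound} this is $\Omega(n/(t\log t))$, hence $\mathrm{space} = \Omega(n/(p\, t\log t))$, with success probability $3/4 > 1-\delta$ for suitable $\delta<1/4$. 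Finally I would note that every segment has both endpoints on $y=0$ or $y=1$, i.e. on two parallel lines, and that a stream of $n$ segments is produced because by~\eqref{property:n/2t1s} each player emits $n/2t$ segments for a total of $\Theta(n)$ (one rescales $n$ accordingly). The main obstacle I anticipate is the geometric gadget design: making \emph{all} $1$-segments across \emph{all} non-full slots pairwise intersect while the (at most one) full slot's $t$ segments are simultaneously pairwise disjoint and disjoint from everything else, all with endpoints constrained to two parallel lines — this requires a careful coordinate assignment and a short planarity/crossing-pattern argument, and it is where I would spend most of the effort; the communication-complexity bookkeeping and the approximation-gap argument are then routine.
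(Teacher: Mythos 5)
Your proof has the construction inverted, and the fix you sketch cannot work in the communication model. Let me explain both issues.

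In your first construction you make segments within the same slot $j$ pairwise intersecting and segments in different slots pairwise disjoint. As you correctly observe, this makes $\alpha(\inter{\mc S})$ equal the number of nonempty slots, which is $\Theta(n/t)$ in both the YES and the NO case (by Property~\ref{property:n/2t1s} every player contributes $n/2t$ ones), so there is no gap to detect. You then try to repair this by routing all segments of \emph{non-full} slots through a common pivot and placing the $t$ segments of the (at most one, by Property~\ref{property:leq1overlap}) \emph{full} slot in a separate strip as pairwise-disjoint segments. The problem is that player $i$, when it places the segment for index $j$, only sees its own bit $x^i_j$; it has no way to know whether $j$ is a full slot, since that depends on all the other players' bits. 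A construction in which the geometry of a segment depends on information its player does not hold is not a valid reduction from \textsc{Multi-Party Set Disjointness}, so this repair cannot be carried out.

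The correct repair is simply to swap your intersection pattern, and the two-parallel-lines constraint makes this easy. On the top line place $t$ points per group, groups in order $1,\dots,n$ left to right; on the bottom line place the same, but with groups in \emph{reversed} order $n,\dots,1$. When $x^i_j=1$, player $i$ inserts the segment joining the $i$-th point of group $j$ on the top to the $i$-th point of group $j$ on the bottom. This placement depends only on $(i,j)$, so it is locally computable. Two segments in the same group $j$ are parallel (never cross); two segments in different groups $j<j'$ have their top endpoints in one order and their bottom endpoints in the opposite order, so they must cross. Hence $\alpha(\inter{\mc S}) = \max_{j\in[n]} |\{i : x^i_j = 1\}|$, which by Property~\ref{property:1ortoverlap} is $1$ in the NO case and $t$ in the YES case — exactly the gap the lemma needs. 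The rest of your argument (the $t$-player, $p$-pass simulation passing the $O(\mathrm{space})$-bit memory state, and invoking Theorem~\ref{thm:DisjLowerbound}) is the right bookkeeping, though note that total communication over $p$ passes is $O(p\cdot t\cdot \mathrm{space})$ and equating that with $\Omega\left(\frac{n}{t\log t}\right)$ gives $\mathrm{space}=\Omega\left(\frac{n}{p\,t^2\log t}\right)$; for constant $t$ this is the same $\Omega(n/p)$ bound that is actually used downstream.
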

\begin{proof}
    Let $(x^1,\ldots,x^t)$ be an instance of \textsc{Disjointness} with $t$ players, each with $n$ bits. We construct a permutation graph depending on the input to \textsc{Disjointness}, as illustrated in Figure~\ref{fig:IS_Perm_lowerbound} for $t=3$. Let the permutation graph have $n$ groups of $t$ points both on the top, labelled $1,\ldots, n$ from left to right. On the bottom do the same, but we label from $n$ to $1$ from left to right. For $i
    \in [t], j \in [n]$, player $i$ creates a segment from the $i$-th point in group $j$ at the top to the $i$-th point in group $j$ at the bottom when $x^i_j = 1$. This creates a permutation graph with $n' = n/2$ vertices by Property~\ref{property:n/2t1s} of Theorem~\ref{thm:DisjLowerbound}.
    
    The players construct the segment stream for some algorithm for \textsc{Max-Clique} as follows, starting from player $1$, player $i$ inputs all their $n/2t$ segments, then passes the memory state of the algorithm to player $i+1$, and this continues until all players have input their segments.
    
    We claim that the graph will contain an independent set of size $t$ if exactly $\textsc{Disj}_{n,t}(x^1,\ldots,x^t) = 1$, and otherwise the maximum independent set size is 1.
    
    First notice that any segment inserted to a group $j\in [n]$ intersects all other segments in the graph, except for segments inserted to group $j$, as these segments are parallel to it. Hence, any independent set can only contain vertices that correspond to segments inserted to the same group $j$, for some $j\in [n]$, and the size of the independent set is the number of $1$'s present over all players at index $j$. Now by Property~\ref{property:1ortoverlap} of Theorem~\ref{thm:DisjLowerbound}, any independent set can have only size $1$ or $t$ in the graph. And indeed, an independent set of size $t$ implies that all $t$ players inserted a segment for some group $j\in[n]$, and hence all have a $1$ for index $j$.
    
    Now it follows from Theorem~\ref{thm:DisjLowerbound} that any algorithm for \indset on a permutation graph with $n'$ vertices that can discern between independent set size $1$ and $t$ with probability at least $3/4$ using $p$ passes over the stream must use at least $\Omega(\frac{n}{p \cdot t\log t}) = \Omega(\frac{n'}{p \cdot t\log t})$ bits of memory.
\end{proof}

We now use Lemma~\ref{lemma:ISlowerboundPerm} to give a general hardness statement for approximation \geomindset in segment streams.

\begin{theorem}\label{thm:ISlowerbound}
    Any constant-factor approximation algorithm for \geomindset that succeeds with probability at least $3/4$ on segment streams using $p$ passes must use at least $\Omega(n/p)$ bits of memory, even when it is known that the segments correspond to a permutation graph.
\end{theorem}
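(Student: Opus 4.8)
The plan is to obtain Theorem~\ref{thm:ISlowerbound} as an immediate consequence of Lemma~\ref{lemma:ISlowerboundPerm} by instantiating the number of players $t$ as a constant tailored to the approximation ratio. Fix a constant $c\ge 1$ and let $A$ be any algorithm that, with probability at least $3/4$, outputs an independent set of size at least $\alpha(\mc{S})/c$ for \geomindset on segment streams using $p$ passes. I would set $t:=\lfloor c\rfloor+1$, a constant that satisfies both $t\ge 2$ and $t>c$.

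The first step is to check that $A$ is strong enough to trigger Lemma~\ref{lemma:ISlowerboundPerm}. On an input with $\alpha(\mc{S})=t$, the algorithm reports a set of size at least $t/c>1$, hence of size at least $2$ because independent-set sizes are integers; on an input with $\alpha(\mc{S})=1$ it necessarily reports a set of size at most $1$. Thus, comparing the reported size to the threshold $1$ lets $A$ distinguish between $\alpha(\mc{S})=1$ and $\alpha(\mc{S})=t$ with probability at least $3/4$. Moreover, the instances produced in the proof of Lemma~\ref{lemma:ISlowerboundPerm} have all segment endpoints on two parallel lines, so their intersection graphs are permutation graphs, which matches the "known to be a permutation graph" clause of the theorem.

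It then remains to read off the bound. Lemma~\ref{lemma:ISlowerboundPerm} forces $A$ to use $\Omega\!\left(\frac{n}{p\,t\log t}\right)$ bits of memory; since $t$ depends only on the fixed constant $c$, the quantity $t\log t$ is itself a constant and the bound simplifies to $\Omega(n/p)$. I do not anticipate any genuine difficulty here: the only point needing care is the elementary observation that a constant-factor approximation still separates the optima $1$ and $t$ as soon as $t>c$, which is precisely what lets the $\Theta(t\log t)$ slack in Lemma~\ref{lemma:ISlowerboundPerm} be absorbed into the hidden constant of the final $\Omega(n/p)$ bound.
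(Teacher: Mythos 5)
Your proposal is correct and follows essentially the same route as the paper's proof: instantiate the number of players in Lemma~\ref{lemma:ISlowerboundPerm} as a constant just above the approximation ratio, note that a $c$-approximation must then separate $\alpha=1$ from $\alpha=t$, and absorb the constant factor $t\log t$ into the $\Omega(\cdot)$. The paper states this more tersely (``there exists some $c\ge 2$ such that the algorithm can distinguish\ldots''), while you make the choice $t=\lfloor c\rfloor+1$ and the integrality argument explicit, but the underlying reduction and conclusion are identical.
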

\begin{proof}
    Let us be given some constant-factor approximation algorithm for \geomindset that succeeds with probability at least $3/4$ on segment streams of permutation graphs in $p$ passes. Then there exists some $c \geq 2$ such that the algorithm can distinguish between an independent set of size $1$ or $c$ in a given graph. But then we can apply Lemma~\ref{lemma:ISlowerboundPerm} to get that this algorithm must use at least $\Omega(\frac{n}{p \cdot c\log c}) = \Omega(n/p)$ bits of memory.
\end{proof}


\section{Intervals and Segments are different}
\label{sec:intvssegs}
When we consider the intersection graph $G$ of the set of segments constructed in 
the proof of Theorem~\ref{thm:ISlowerbound}
one can see that it has a straight-forward representation as a set of intervals 
whose intersection graph is isomorphic to $G$.
Also, notice that the size of the independent set of the construction is not dependent on
the length of the bit strings, but only on the number of players.
For example, we can rule out the existence of a $2$-approximation streaming algorithm 
using any constant number $p$ of passes and $o(n/p)$ bits of memory,
already when $t/2 \geq 2 \Longleftrightarrow t \geq 4$ players are used in the construction 
presented in Lemma~\ref{lemma:ISlowerboundPerm}.

At the same time there exists a $2$-approximation one pass streaming algorithm
for independent sets of intervals that
uses only $O(\alpha(\mathcal I)\log |\mathcal I|)$ bits of memory 
where $\mathcal I$ is the set of input intervals~\cite{cabelloIntervalSelection2017,emekSpaceConstrainedInterval2016}.
Hence, these algorithms find a $2$-approximation of the independent set of the intersection graph
constructed in the proof of Lemma~\ref{lemma:ISlowerboundPerm} using only $O(\log n)$ bits of memory
if the graph was given as a set of intervals.
This leads to the following corollary.
\begin{corollary}
    Given a stream of segments $\mc S$ whose intersection graph $\inter{\mc S}$
    is in the intersection of permutation and interval graphs,
    there is no algorithm that uses $o(n/p)$ bits of memory and $p \geq 1$ passes and
    computes a stream of intervals $\mc I$ such that $\inter{\mc I}$ is isomorphic to $\inter{\mc S}$.
\end{corollary}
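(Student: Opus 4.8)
The plan is to prove the corollary by contradiction, pipelining a hypothetical segment‑to‑intervals stream converter into the known memory‑efficient interval selection algorithm. So suppose that an algorithm $A$ exists that, on a $p$-pass stream of segments $\mc S$ whose intersection graph $\inter{\mc S}$ is both a permutation and an interval graph, uses $o(n/p)$ bits and emits (as a write‑only output stream) a set $\mc I$ of intervals with $\inter{\mc I}$ isomorphic to $\inter{\mc S}$. Let $B$ be the one‑pass, $O(\alpha(\mc I)\log|\mc I|)$-bit $2$-approximation algorithm for independent set of intervals of Emek et al.~\cite{emekSpaceConstrainedInterval2016} and Cabello et al.~\cite{cabelloIntervalSelection2017}. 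I would then run $A$ and $B$ in a pipeline: each time $A$ writes the next interval of $\mc I$, hand it immediately to $B$ as the next element of $B$'s input. Since $A$'s output is a stream and $B$ reads it in a single pass, this needs only an $O(1)$-interval buffer, the composed procedure still makes only $p$ passes over $\mc S$, and its total space is the sum of the spaces of $A$ and $B$ plus $O(\log n)$.

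Next I would observe that this composition is a streaming $2$-approximation for \geomindset on those segment streams whose intersection graph is a permutation‑and‑interval graph: since $\inter{\mc I}$ and $\inter{\mc S}$ are isomorphic, $\alpha(\mc I) = \alpha(\mc S)$, so the set returned by $B$ has size in $[\alpha(\mc S)/2,\,\alpha(\mc S)]$, and in particular it distinguishes $\alpha(\mc S) = 1$ from $\alpha(\mc S) = t$ as soon as $t/2 \ge 2$, i.e.\ $t \ge 4$. Now I apply this to the hard instances of Lemma~\ref{lemma:ISlowerboundPerm} with $t = 4$: as observed at the start of this section, the intersection graph produced there is simultaneously a permutation graph and an interval graph, it has $n$ vertices (so $\mc I$ has $n$ intervals), and its independent‑set number is $t$ on yes‑instances of \textsc{Disjointness} and $1$ on no‑instances. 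Since $\alpha(\mc I) = \alpha(\mc S) \le t = O(1)$, algorithm $B$ uses $O(\alpha(\mc I)\log|\mc I|) = O(\log n)$ bits, so the pipeline runs in $o(n/p) + O(\log n) = o(n/p)$ bits and $p$ passes, while correctly distinguishing $\alpha = 1$ from $\alpha = t$.

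This contradicts Lemma~\ref{lemma:ISlowerboundPerm}, which for the constant $t = 4$ states that any such $p$-pass distinguisher requires $\Omega\bigl(n/(p\,t\log t)\bigr) = \Omega(n/p)$ bits of memory. Hence no converter $A$ with the claimed resources can exist, which is precisely the corollary.

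I expect the main obstacle to be making the pipelining argument watertight inside the streaming model: one must verify that feeding the write‑only output of $A$ as the one‑pass input of $B$ adds nothing beyond the $O(1)$-interval buffer and does not increase the number of passes over $\mc S$ — and, relatedly, that each interval emitted by $A$ can be stored by $B$ within the advertised $O(\alpha(\mc I)\log|\mc I|)$ bound; in the standard word‑size model (endpoints of $O(\log n)$ bits) this is immediate, but it is worth pinning down. A secondary, routine point is handling randomness: the interval‑selection algorithm is deterministic, so if $A$ is deterministic the composition is exact; otherwise one amplifies $A$'s success probability close to $1$ so that the composition still meets the $3/4$ success requirement of Lemma~\ref{lemma:ISlowerboundPerm}.
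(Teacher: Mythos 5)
Your proposal is correct and matches the paper's intended argument: the paper derives the corollary from exactly this pipelining observation — a low-memory segment-to-interval converter composed with the known $O(\alpha(\mc I)\log|\mc I|)$-bit $2$-approximation for interval independent set would yield a $2$-approximation on the hard segment instances of Lemma~\ref{lemma:ISlowerboundPerm}, contradicting the $\Omega(n/p)$ lower bound there (the paper even fixes $t\ge 4$ for the same reason you do). You have simply made explicit the composition, the space accounting $o(n/p)+O(\log n)=o(n/p)$, and the success-probability bookkeeping, all of which the paper leaves implicit in the prose preceding the corollary.
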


\section{Independent Sets in Streams of $c$-Intervals}
\label{sec:cintervals}

A \emph{$c$-interval} is a set of non-overlapping intervals $\{I_1,\ldots,I_n\}$ on the real line.
We say to $c$-intervals intersect if at least two of their intervals have one point in common.
We call a family of $c$-intervals \emph{separated} if the intervals can be split into
independent groups of intervals, each containing at most one interval from each $c$-interval,
without changing which $c$-intervals intersect.
Since we only consider $2$-intervals we can talk of left and right intervals.
Formally, let $T = \{L,R\}$ be a $2$-interval such that 
the startpoint of $L$ is left of the startpoint of $R$,
then we denote $L$ as the \emph{left} interval of $T$ and $R$ as the \emph{right} interval of $T$.

For the reduction we use the $\textsc{Chain}_t$ communication problem
which was introduced by Cormode et al.~\cite{cormodeIndependentSets2019}.

\begin{definition}[Cormode et al.~\cite{cormodeIndependentSets2019}]
    The $t$-party chained index problem \textsc{Chain$_t$} consists of 
    $t-1$ $n$-bit binary vectors $\{x^{i}\}^{t-1}_{i=1}$,
    along with corresponding indices $\{\sigma_i\}^{t-1}_{i=1}$ from the range $[n]$.
    We have the promise that the entries $\{x^{i}_{\sigma_i}\}^{t-1}_{i=1}$ 
    are all equal to the desired bit $z \in \{0,1\}$. 
    The input is initially allocated as follows:
    \begin{itemize}
        \item The first party $P_1$ knows $x^{1}$
        \item Each intermediate party $P_p$ for $1 < p < k$ knows $x^{p}$ and $\sigma_{p-1}$
        \item The final party $P_k$ knows just $\sigma_{k-1}$
    \end{itemize}
    Communication proceeds as follows: $P_1$ sends a single message to $P_2$,
    then $P_2$ communicates to $P_3$, and so on, with each party sending exactly one message 
    to its immediate successor.
    After all messages are sent, $P_k$ must correctly output $z$,
    succeeding with probability at least $2/3$.
    If the promise condition is violated, any output is considered correct.
\end{definition}

Cormode et al.~\cite{cormodeIndependentSets2019} showed the following result in the same paper.

\begin{theorem}[Cormode et al.~\cite{cormodeIndependentSets2019}]\label{thm:chain}
    Any communication scheme $\mathcal B$ which solves \textsc{Chain$_t$} 
    must communicate at least $\Omega(\frac{n}{t^2})$ bits.
\end{theorem}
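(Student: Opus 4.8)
The plan is to prove the bound information-theoretically. By Yao's minimax principle it suffices to fix a distribution $\mu$ over inputs and show that every \emph{deterministic} protocol outputting $z$ correctly with probability at least $2/3$ under $\mu$ communicates $\Omega(n/t^2)$ bits in total; the argument I sketch in fact yields the stronger $\Omega(n)$, which clearly implies the stated bound. I would take $\mu$ to be: $z$ uniform in $\{0,1\}$; the pointers $\sigma_1,\dots,\sigma_{t-1}$ independent and uniform in $[n]$; and each $x^i$ uniform in $\{0,1\}^n$ conditioned on $x^i_{\sigma_i}=z$. Under $\mu$ the promise always holds, the input is split among the parties as required, and $\mu$ has the two properties the proof leans on: conditioned on $\sigma_i=s$ the string $x^i$ is uniform on $\{0,1\}^n$, independently of $s$, and $z$ is independent of $(\sigma_1,\dots,\sigma_{t-1})$ (the only coupling between $z$ and the pointers runs through the strings, which we never condition on).

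Fix such a protocol. For $1\le p\le t-1$ let $M_p$ be the message $P_p$ sends to $P_{p+1}$ — a deterministic function of $(M_{p-1},x^p,\sigma_{p-1})$ and $P_p$'s private coins, with $M_0$ empty — and put $c_p=|M_p|$ and $C=\sum_p c_p$. Define the potential $\Phi_p:=I(z\,;\,M_p \mid \sigma_1,\dots,\sigma_p)$, with $\Phi_0:=0$. The argument has two endpoints. The terminal one is correctness: $P_t$ predicts $z$ from $(M_{t-1},\sigma_{t-1})$ with probability $\ge 2/3$ and $z$ is uniform, so Fano's inequality gives $I(z;M_{t-1},\sigma_{t-1})\ge 1-H(1/3)$, and since $z$ is independent of $(\sigma_1,\dots,\sigma_{t-1})$ this upgrades to $\Phi_{t-1}\ge 1-H(1/3)=\Omega(1)$. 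The initial one is that $M_1$ depends only on $x^1$, which is uniform given $\sigma_1$, so superadditivity of mutual information over the independent coordinates of $x^1$ gives $\sum_{s\in[n]} I(x^1_s;M_1\mid\sigma_1=s)\le I(x^1;M_1)\le H(M_1)\le c_1$; averaging over the uniform $\sigma_1$ turns the left-hand side into $n\,\Phi_1$, so $\Phi_1\le c_1/n$.

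The crux is the per-link increment bound: one step of communication of length $c_p$ should raise the transcript's knowledge of the hidden bit by at most a $c_p/n$ fraction, i.e.\ (morally) $\Phi_p\le\Phi_{p-1}+O(c_p/n)$. I would prove it by $\Phi_p\le I(z;M_p,M_{p-1}\mid\sigma_1,\dots,\sigma_p)=I(z;M_{p-1}\mid\sigma_1,\dots,\sigma_p)+I(z;M_p\mid M_{p-1},\sigma_1,\dots,\sigma_p)$. The first summand equals $\Phi_{p-1}$ because $\sigma_p$ is independent of $(z,M_{p-1},\sigma_1,\dots,\sigma_{p-1})$. For the second, condition on the whole prefix: given $M_{p-1}$, the pointers $\sigma_1,\dots,\sigma_p$, and $P_p$'s coins, $M_p$ becomes a function of $x^p$ alone, $z$ equals the coordinate $x^p_{\sigma_p}$, and — this is exactly where $\mu$ is used — the coordinates of $x^p$ are mutually independent (all but the $\sigma_p$-th are uniform, the $\sigma_p$-th carries the posterior of $z$ forced by the prefix). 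Superadditivity over these coordinates bounds $\sum_s I(x^p_s;M_p\mid\cdots)$ by $H(M_p)\le c_p$, and averaging over the uniform $\sigma_p$ divides by $n$. Telescoping, $\Omega(1)\le\Phi_{t-1}\le\sum_{p=1}^{t-1}c_p/n=C/n$, hence $C=\Omega(n)$, which in particular gives the claimed $\Omega(n/t^2)$.

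I expect the subtle point — and the place where a looser bookkeeping forces the weaker $\Omega(n/t^2)$ of Cormode et al.\ rather than $\Omega(n)$ — to be the increment bound: in the last step the product distribution whose coordinates one sums over depends, through the $\sigma_p$-th coordinate, on the posterior of $z$ determined by the prefix, so the clean ``you cannot learn a uniformly random coordinate for free'' estimate, i.e.\ the two-party \textsc{Index} lower bound in disguise, has to be transferred from the uniform distribution to this slightly tilted one at every link, and the size of the tilt is exactly the accumulated knowledge $\Phi_{p-1}$. Keeping this error under control across all $t-1$ links is the heart of the argument; alternatively, one can sidestep it by reducing each link directly to a two-party \textsc{Index} instance, at the price of a polynomial-in-$t$ loss, which is how the literal statement $\Omega(n/t^2)$ is obtained.
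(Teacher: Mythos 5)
This theorem is not proved in the paper at all; it is imported verbatim from Cormode et al., so there is no in-paper proof to compare against. Judged on its own terms, your write-up is an honest and well-structured \emph{plan}, but it has a genuine gap exactly where you say the ``heart of the argument'' lies: the per-link increment bound $\Phi_p\le\Phi_{p-1}+O(c_p/n)$ is never actually established for $p\ge 2$. Your superadditivity step needs, for a fixed prefix $(M_{p-1},\sigma_{<p})=w$, a \emph{single} product distribution on $x^p$ over whose coordinates you can sum $I(x^p_s;M_p)$; but under your $\mu$ the conditional law of $x^p$ given $(w,\sigma_p=s)$ changes with $s$ (the tilted coordinate moves with $\sigma_p$, and its bias is the posterior of $z$ given $w$). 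So the quantity you need to bound is $\mathbb{E}_s\,I_{D_{w,s}}(x^p_s;M_p)$ with a different measure $D_{w,s}$ for each $s$, and the inequality $\sum_s I(x^p_s;M_p)\le H(M_p)$ does not apply to it. Controlling the discrepancy between $I_{D_{w,s}}$ and the untilted $I$ in terms of the accumulated information $\Phi_{p-1}$ is precisely the technical content of the tight lower bounds for \textsc{Chain}, and you have not supplied it; the fallback you mention (a per-link reduction to two-party \textsc{Index}) is likewise only named, not carried out. The base case $\Phi_1\le c_1/n$ and the Fano endpoint $\Phi_{t-1}\ge 1-H(1/3)$ are fine.

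For the record, the cited $\Omega(n/t^2)$ of Cormode et al.\ is obtained by a different bookkeeping than the one you aim for: roughly, each message of length $c_p$ yields (via an \textsc{Index}-type argument plus a Pinsker-style conversion) an advantage degradation of order $\sqrt{c_p/n}$ per link, and requiring the $t-1$ accumulated degradations to stay below a constant forces some $c_p=\Omega(n/t^2)$. Your all-information-theoretic telescoping, if the increment lemma were proved, would give the stronger $\Omega(n)$; as written, it proves neither.
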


In the following we use the \textsc{Chain}$_t$ to show that there is no one-pass streaming algorithm
that computes a constant-factor approximation of the maximum independent set 
for a family of separated $2$-intervals using $o(n)$ bits of memory.

\begin{remark}
For \textsc{Chain}$_t$ we may assume that party 
$i > 1$ knows all indices before $\sigma_{i-1}$. 
To realize this, just assume that every party $i$ appends 
all $i-1$ previous indices to its message.
This uses only $O(t\log n)$ bits in each such message
and hence $O(t^2\log n)$ bits in total.
As this is a lower order term with respect to the bound in Theorem~\ref{thm:chain}
we retain the linear communication bound of $\Omega(\frac{n}{t^2})$.
\end{remark}

We define an \emph{interval stack} as a set of intervals $I_1,\ldots,I_n$ on the real line where
first all startpoints appear in order of the indices and
then all endpoints, again in order of the interval indices.
We denote as \emph{left-gap} the space between 
the startpoint of $I_i$ and $I_{i+1}$ for $i = 1,\ldots,n-1$ and
the startpoint of $I_n$ and the endpoint of $I_1$.
Observe that any interval containing a point of the left gap of $I_i$ 
intersects all intervals $I_j$ with $1 \leq j \leq i$.
As for independent sets of segments we first show a technical lemma.

\begin{figure}[t]
    \centering
    \includegraphics{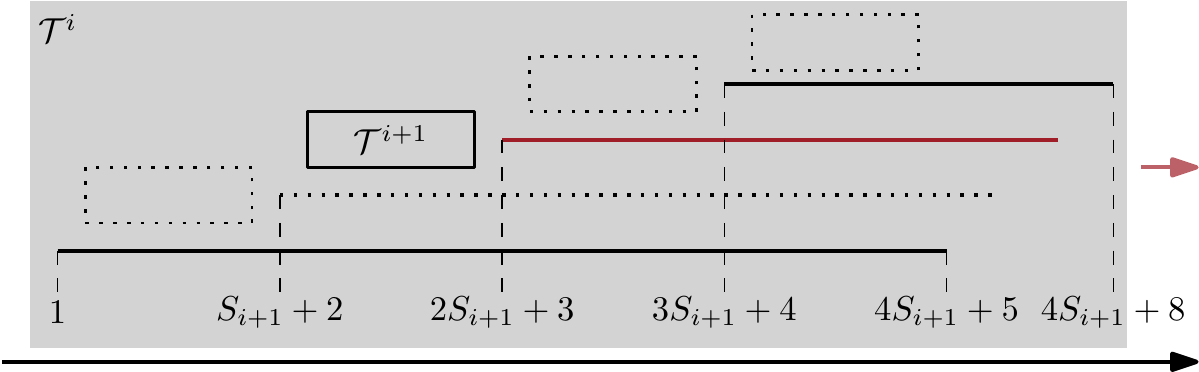}
    \caption{Left intervals created for party $i$ in the construction of Lemma~\ref{lem:twointervallower}. 
    The intervals in the grey box are for party $i$, the intervals outside are of party $i-1$,
    and the boxes mark the left gap spaces for party $i+1$.
    The red interval is the interval corresponding to a $1$ bit at $\sigma_i$.
    Dotted intervals and boxes are not actually inserted by the parties.
    The shown coordinates are the local coordinates for the interval stack inserted by party $i$.}
    \label{fig:itwointervals}
\end{figure}

\begin{lemma}\label{lem:twointervallower}
    For any $t \geq 2$, any algorithm for \geomindset that 
    can distinguish between an independent set of size $1$ and $t$ and 
    succeeds with probability at least $2/3$ on streams of $2$-intervals
    must use at least $\Omega(\frac{n}{t^3})$ bits of memory,
    even if the union of the $2$-intervals is a $2$-union interval set.
\end{lemma}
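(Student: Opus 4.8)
The plan is to reduce from \textsc{Chain}$_t$ using the machinery already set up: each party $i < t$ holds an $n$-bit vector $x^i$, and we want the streaming algorithm to ``read off'' the hidden bit $z$ by distinguishing an independent set of size $1$ from one of size $t$. The idea is that each party $i$ will insert a block of $2$-intervals into the stream, using an interval stack for the left intervals and a matching interval stack for the right intervals, so that the only way to build a large independent set is to pick, for each party, the $2$-interval corresponding to the distinguished index $\sigma_i$, and this is only possible when the relevant bits are all $1$ (i.e. $z = 1$).

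Concretely, I would have party $i$ insert, for every coordinate $j \in [n]$ with $x^i_j = 1$, a $2$-interval whose left interval sits in the ``slot'' for $j$ inside party $i$'s interval stack and whose right interval sits symmetrically in a mirrored stack, as in Figure~\ref{fig:itwointervals}. The key geometric feature (exploiting the ``left-gap'' observation stated just before the lemma) is: any $2$-interval placed by party $i$ at coordinate $j$ intersects every $2$-interval placed by party $i$ at a coordinate $j' \neq j$ (they share a point in one of the stacks), so within a single party's block the independent set has size at most $1$. Across parties, the nesting is arranged using the known index $\sigma_{i-1}$: party $i$ places its whole block inside the left-gap of interval $\sigma_{i-1}$ in party $i-1$'s stack (using the Remark, party $i$ knows all indices $\sigma_1,\dots,\sigma_{i-1}$, so it can locate this gap). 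Then the $2$-interval of party $i-1$ at index $\sigma_{i-1}$ is disjoint from party $i$'s block, while every \emph{other} interval of party $i-1$ intersects all of party $i$'s intervals. Consequently an independent set of size $t$ exists iff for every $i$ party $i$ actually placed an interval at $\sigma_i$, i.e. iff $x^i_{\sigma_i} = 1$ for all $i$, i.e. iff $z = 1$; otherwise the maximum independent set has size $1$.

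Having established the combinatorial correctness, the reduction is completed exactly as in Lemma~\ref{lemma:ISlowerboundPerm}: the $t$ parties simulate a $p = 1$-pass streaming algorithm, each running it on its own block and forwarding the memory state plus the $O(t\log n)$ bits of index information to the next party. The total communication is $O(t \cdot M) + O(t^2\log n)$ where $M$ is the memory used by the algorithm; since \textsc{Chain}$_t$ requires $\Omega(n/t^2)$ bits by Theorem~\ref{thm:chain} and the $O(t^2\log n)$ term is lower order, we get $M = \Omega(n/t^3)$. The number of $2$-intervals created is $n' = \sum_i |\{j : x^i_j = 1\}| = \Theta(n)$, so the bound is $\Omega(n'/t^3)$ as claimed. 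Finally, because within each party the left intervals form an interval stack and the right intervals a mirrored interval stack, and blocks nest without interleaving, the whole family is \emph{separated} in the sense defined above (the left and right intervals of all $2$-intervals can be split into the two independent groups ``all left intervals'' and ``all right intervals'' without changing intersections), so it is a $2$-union interval set.

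\textbf{Main obstacle.} The delicate point is verifying the geometric invariant across party boundaries: I must choose the local coordinates so that (i) party $i$'s block fits entirely inside the left-gap of interval $\sigma_{i-1}$ of party $i-1$ while (ii) every non-$\sigma_{i-1}$ interval of party $i-1$ still pierces \emph{all} of party $i$'s intervals (in one of the two stacks), and (iii) nothing unintended happens between party $i$'s left stack and party $i-1$'s right stack. Getting the interval-stack ``startpoints first, then endpoints'' ordering to produce exactly these intersections — and no others — across all $t$ nested levels, using only the index information each party is allowed to know, is where the real care is needed; the rest follows the template of Lemma~\ref{lemma:ISlowerboundPerm} essentially verbatim.
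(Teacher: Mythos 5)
Your plan matches the paper's proof in its essential structure: reduce from $\textsc{Chain}_t$, have each party insert an interval stack for the left halves of its $2$-intervals and a mirrored stack for the right halves, nest party $i$'s block into a ``slot'' determined by $\sigma_{i-1}$ (using the Remark so party $i$ knows $\sigma_1,\ldots,\sigma_{i-1}$), and conclude that a size-$t$ independent set exists iff $z=1$, else size $1$. The communication accounting ($t$ forwardings of the memory state plus the lower-order $O(t^2\log n)$ index overhead, against Theorem~\ref{thm:chain}) is also the paper's. Two small bookkeeping differences: the paper scales the $\textsc{Chain}_t$ bit-strings to length $n/t$ so that the stream contains $O(n)$ $2$-intervals, whereas your $n'=\Theta(n)$ claim is not forced (a party can have up to $n$ ones, so the stream could have $\Theta(tn)$ $2$-intervals); and the paper adds an extra single $2$-interval for party $t$, which you need in order to reach $t$ mutually disjoint $2$-intervals rather than $t-1$.

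The genuine gap is precisely the one you flag as the ``main obstacle,'' and as written your geometric claim is not merely unverified but false. By the observation quoted right before the lemma, the left-gap of $L^{i-1}_{\sigma_{i-1}}$ is \emph{contained} in $L^{i-1}_j$ for every $j\le\sigma_{i-1}$, in particular in $L^{i-1}_{\sigma_{i-1}}$ itself. So if party $i$'s left stack literally sits ``inside the left-gap of interval $\sigma_{i-1}$'' and the right stack is placed by the symmetric rule, then $L^{i-1}_{\sigma_{i-1}}$ hits party $i$'s whole left stack and $R^{i-1}_{\sigma_{i-1}}$ hits the whole right stack, so $T^{i-1}_{\sigma_{i-1}}$ intersects \emph{everything} party $i$ inserts --- the opposite of ``is disjoint from party $i$'s block.'' The construction only works because the two nestings must be \emph{asymmetric}: one has to arrange that $L^{i-1}_j$ meets party $i$'s left stack exactly when $j<\sigma_{i-1}$ and $R^{i-1}_j$ meets party $i$'s right stack exactly when $j>\sigma_{i-1}$, so that every non-$\sigma_{i-1}$ $2$-interval of party $i-1$ is pierced on exactly one side and $T^{i-1}_{\sigma_{i-1}}$ on neither. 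This is what the paper's explicit coordinates for the startpoints, the stack lengths $S_i$, and the offsets $P_i$ are there to enforce, and it is exactly the step your proposal identifies as delicate without actually carrying it out. Until you fix the nesting point (e.g., shift it by one position and reverse the indexing on the right stack) and verify the resulting intersection pattern across all $t$ levels, the reduction does not go through.
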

\begin{proof}
    Given an instance of \textsc{Chain}$_t$ with $\frac nt$-length bit strings $x^i$ and indices $\sigma_j$.
    Let $N = \frac nt$ and will assume for simplicity that $\frac nt$ is a whole number.
    We create one $2$-interval $T^i_j = (L^i_j,R^i_j)$ for each $1$ bit at index $j$ of bit string $i$ and
    one additional $2$-interval for party $t$.
    In the following we first describe the construction of the left intervals.
    See Figure~\ref{fig:itwointervals} for an illustration.
    
    Create an interval stack 
    $\mc L^i = \{L_j^i \mid \forall j \in [N]: x^i \text{ has a } 1 \text{ bit at index } j\}$.
    To simplify the presentation we assume that all $\frac nt$ intervals are present in $\mc T^i$.
    When actually constructing the intervals in a stream party $i$ simply does not add an interval
    when the $j$th bit is set to $0$, but still shifts the coordinates accordingly.
    We initially place the intervals of $\mc L^1$ and
    then place $\mc L^i$ for $i > 1$ in the left gap of $L^{i-1}_{\sigma_{i-1}}$.
    For player $t$ we add one interval $L_1^t$ in the left gap of $L_{\sigma_{t-1}}^{t-1}$.
    Let $\mc L$ be the union of all $\mc L^i$.
    
    To complete the construction we create the same construction using the reversed bit strings for each party.
    This creates the interval stacks $\mc R^i$, $i = 1,\ldots,n$ and with
    $R_j^i$ we denote the right interval inserted by the $i$th party for the 
    $1$ bit at index $j$ in the non-reversed bit string $x^i$.
    Let $\mc R$ be the union of all $\mc R^i$.
    Finally, we create a set of $2$-intervals as 
    $\mc T = \{(L_j^i,R_j^i) \mid L_j^i \in \mc L^i \text{ and } R_j^i \in \mc R^i \}$.
    
    Consider a $2$-interval $T_j^i = (L,R)$ inserted for bit string $x^i$ such that
    $j \neq \sigma_i$ for any $i \in \{1,\ldots,t-1\}$.
    Then, $L$ is contained in all intervals $L_b^a$ with $a < i$ and $b < j$.
    Moreover, $L$ contains every $L_b^a$ with $a > i$.
    Similarly, $R$ is contained in all intervals $R_b^a$ with $a > i$ and $b > j$ and
    contains every $R_b^a$ with $a > i$.
    Consequently if $T_j^i$ is part of an independent set $\mc I \subseteq \mc T$
    we can only add $2$-intervals $T_b^a$ to $\mc I$ with $a < i$ and $j = \sigma_a$.
    
    If the answer bit is $0$ then no $2$-interval corresponding to some $\sigma_i$ index exists and hence
    by the above argumentation the largest independent set has size one.
    If the answer bit is $1$ then there is a $2$-interval $T_{\sigma_i}^i$
    for every $\sigma_i$ with $i = 1,\ldots,t$ and all $t$ of them are independent.
    Hence, the largest independent set in this case has size $t$.
    
    It remains to describe the precise coordinates of the intervals and 
    argue that we only need $O(\log n)$ bits to represent the construction for every fixed $t$.
    For each party $i = 1,\ldots,t$ let $S_i$ be the length of the left interval stack.
    Since for party $t$ we insert only one $2$-interval we set $S_t = 2$.
    Let $\mc T$ be the set of $2$-intervals created as above. 
    In the following we consider only the left intervals of every $T_j^i \in \mc T$.
    The calculation and placement works analogously for the right intervals after reversing every $x^i$.
    Let $L_j^i$ be a left interval for party $i$ and index $j$.
    We put the startpoint of $L_j^i$ at position 
    $
        1 + (j-1) \cdot (S_{i+1} + 1)
    $
    and its endpoint at
    $
        j + N \cdot (S_{i+1} + 1).
    $
    Hence, for party $i$ its left interval stack has length at most
    \[
        S_i = N + N\cdot (S_{i+1} + 1) = N \cdot (S_{i+1} + 2).
    \]
    This can be written as a closed formula
    \[
        S_i = 4N^{t-i} \cdot 2\sum_{j=1}^{t-(i+1)}N^j = 4N^{t-i} \cdot 2\left(\frac{N^{t-i} - N}{N-1}\right).
    \]
    Now, party $i$ places its left stack at 
    \[
        P_i = 1 + \sum_{j=1}^{i-1}\left((\sigma_{j}-1) \cdot (S_{j+1} + 1) + 1\right).
    \]
    The last party places an interval of length two at position $P_{t}$.
    Since every left interval placed by a party $i > 1$ is nested by 
    the intervals inserted into the stream by the first party we 
    can conclude that $S_1 \in O\left(N^{t-1}\right)$. 
    This number can be represented using $O(t\log N) = O(t\log\frac nt)$ bits.
    Since $t$ can be treated as a constant we get that we only require $O(t\log\frac nt) = O(\log n)$ bits.
\end{proof}

We conclude Theorem~\ref{thm:twointervallower} from Lemma~\ref{lem:twointervallower} 
in the same way as for Theorem~\ref{thm:ISlowerbound}.
\begin{theorem}\label{thm:twointervallower}
    Any constant-factor approximation algorithm for \geomindset that succeeds with probability at least $2/3$ on streams of $2$-intervals
    requires at least $\Omega(n)$ bits of memory,
    even if the $2$-intervals are separated.
\end{theorem}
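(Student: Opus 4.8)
The plan is to mimic exactly the amplification step already used to derive Theorem~\ref{thm:ISlowerbound} from Lemma~\ref{lemma:ISlowerboundPerm}, now starting from Lemma~\ref{lem:twointervallower}. Suppose we are given a constant-factor approximation algorithm for \geomindset that works on streams of separated $2$-intervals, succeeds with probability at least $2/3$, and uses $s(n)$ bits of memory. Being a $c$-approximation for some constant $c$, it must in particular be able to distinguish instances whose optimum is $1$ from instances whose optimum is at least $c$ (rounding $c$ up to an integer if necessary, and noting $c \ge 2$ without loss of generality). I would then invoke Lemma~\ref{lem:twointervallower} with $t = c$: any algorithm that distinguishes independent set size $1$ from size $t$ on such streams needs $\Omega(n/t^3)$ bits. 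Since $t = c$ is a constant, $t^3$ is a constant, so $\Omega(n/t^3) = \Omega(n)$, giving the claimed bound.

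The one point that needs a sentence of care is that the construction in Lemma~\ref{lem:twointervallower} already produces a \emph{separated} family of $2$-intervals (there the wording is "$2$-union interval set"): the left intervals $\mc L$ and the right intervals $\mc R$ form two groups, each containing exactly one interval per $2$-interval, and splitting along this partition does not change which $2$-intervals intersect, because two $2$-intervals intersect in the construction precisely when their left intervals intersect or their right intervals intersect. Hence the hard instances fed to the algorithm are legitimate inputs for an algorithm that is promised separated $2$-intervals, so the reduction goes through with the extra "even if separated" guarantee intact.

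I do not anticipate a genuine obstacle here; the theorem is a routine constant-amplification wrapper around the lemma, identical in structure to the proof of Theorem~\ref{thm:ISlowerbound}. The only thing to keep straight is the bookkeeping that the number of passes is fixed at one (the \textsc{Chain}$_t$ lower bound of Theorem~\ref{thm:chain} is inherently one-way, so unlike the segment case there is no $p$ in the statement), and that the constant hidden in $\Omega(\cdot)$ depends on $c$ but that is harmless since $c$ is an absolute constant determined by the approximation algorithm. A clean writeup is therefore:

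\begin{proof}
    Suppose we are given a constant-factor approximation algorithm for \geomindset that succeeds with probability at least $2/3$ on streams of separated $2$-intervals. Then there is a constant $c \geq 2$ such that the algorithm can distinguish between a maximum independent set of size $1$ and one of size $c$. The family of $2$-intervals constructed in the proof of Lemma~\ref{lem:twointervallower} is separated: the left intervals $\mc L$ and right intervals $\mc R$ partition the intervals into two groups, each containing exactly one interval from every $2$-interval, and by construction two $2$-intervals intersect if and only if their left intervals intersect or their right intervals intersect, so this partition does not change the intersection pattern. Hence these instances are valid inputs, and applying Lemma~\ref{lem:twointervallower} with $t = c$ shows that the algorithm must use at least $\Omega(\frac{n}{c^3}) = \Omega(n)$ bits of memory.
\end{proof}
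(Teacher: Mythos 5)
Your proof is correct and matches the paper's intended argument exactly: the paper itself only remarks that Theorem~\ref{thm:twointervallower} follows from Lemma~\ref{lem:twointervallower} ``in the same way as'' Theorem~\ref{thm:ISlowerbound}, and you have carried out precisely that constant-amplification step, including the correct substitution $t=c$ to obtain $\Omega(n/c^3)=\Omega(n)$. Your additional observations --- that the constructed family is separated via the $\mc L$/$\mc R$ split, and that the pass parameter $p$ disappears because \textsc{Chain}$_t$ is a one-way communication problem --- are both accurate and in fact more explicit than what the paper writes.
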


\section{Independent Sets in Streams of Unit-Height Rectangles}
\label{sec:rectangles}

In this section, we study the independent set problem for a stream of unit height arbitrary width rectangles. 
To conform with previous work we assume in this section that one cell of memory can store one rectangle, 
i.e., one cell of memory has $\Theta(\log n)$ bits where all coordinates of the rectangles are
assumed to be in $O(n)$.
Cabello and P\'erez-Lantero~\cite{cabelloIntervalSelection2017} studied the independent set problem 
for streams of intervals on the real line and achieved the following result. 

\begin{theorem}[Theorem 5~\cite{cabelloIntervalSelection2017}]
    \label{thm:2apxintvl}
    Let $\mathcal I$ be a set of intervals in the real line that arrive in a data stream.
    There is a data stream algorithm to compute a $2$-approximation to the largest independent subset of $\mathcal I$
    that uses $O(\alpha(\mathcal I))$ space and handles each interval of the stream in $O(\log \alpha(\mathcal I))$.
\end{theorem}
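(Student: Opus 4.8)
The plan is to emulate, in a single pass, the classical optimal offline routine for interval selection --- repeatedly take the unselected interval with the leftmost right endpoint and delete everything meeting it (see e.g.~\cite{KT06}) --- while storing only $O(\alpha(\mathcal I))$ intervals. It suffices to maintain online a subfamily $H$ of the intervals seen so far such that (i) $H$ is small, and (ii) $H$ contains an independent set of size at least half of the maximum independent set of the prefix seen so far; at the end we output a maximum independent subset of $H$, which is then a $2$-approximation of $\alpha(\mathcal I)$.

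For (i) we maintain the invariant that $H$ has \emph{ply at most two}, i.e.\ no point of the line lies in three intervals of $H$. Since three pairwise intersecting intervals share a common point (the Helly property in one dimension), a ply-two family is triangle-free; being also chordal (interval graphs are chordal), its intersection graph is a forest, and a forest on $k$ vertices is bipartite and hence has independence number at least $k/2$. Therefore $|H|\le 2\alpha(H)\le 2\alpha(\mathcal I)$, so $H$ fits in $O(\alpha(\mathcal I))$ memory cells, stored in a balanced search tree keyed by interval endpoints.

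The update when an interval $I=[a,b]$ arrives is local: querying the tree in $O(\log|H|)=O(\log\alpha(\mathcal I))$ time we find the members of $H$ meeting $I$; if adding $I$ keeps the ply at most two we add it, and otherwise we evict $O(1)$ intervals from $H\cup\{I\}$ according to a fixed rule mimicking the offline greedy, namely keeping, among any conflicting set, the interval whose right endpoint is leftmost (so ``room to the right'' is preserved). Each arrival thus costs $O(\log\alpha(\mathcal I))$ time, and the final extraction runs the leftmost-right-endpoint greedy on the $O(\alpha(\mathcal I))$ stored intervals using no extra space.

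The main obstacle is the approximation guarantee, that is, showing the eviction rule preserves property (ii). The natural route is a charging argument against a fixed optimum $O=o_1<\dots<o_{\alpha}$ of the whole stream: maintain the invariant that every $o_\ell$ still capable of lying in an optimal solution has a ``witness'' stored in $H$ --- for instance an interval of $H$ whose span is nested in that of $o_\ell$, or one meeting $o_\ell$ only --- and check that adding an interval while keeping ply two, and adding-with-eviction, both carry witnesses to witnesses; since a single stored interval can witness at most two consecutive $o_\ell$, this gives $\alpha(H)\ge\alpha(\mathcal I)/2$. The delicate part is the adversarial arrival order: an interval belonging to the true optimum may arrive while it overlaps current members of $H$, so that it or its would-be witness is evicted before its usefulness is apparent --- the ``moving target'' phenomenon --- and the rule must be chosen so that the retained interval always ends no later and the loss is at most a factor two, with no slack to spare since by the lower bound of Emek et al.~\cite{emekSpaceConstrainedInterval2016} no $O(\alpha(\mathcal I))$-space algorithm can beat this factor. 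Once property (ii) is established, the greedy on $H$ returns an independent set of size at least $\alpha(\mathcal I)/2$, and the space and per-interval time bounds follow as above.
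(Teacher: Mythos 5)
This theorem is a cited result (Theorem~5 of Cabello and P\'erez-Lantero), not one proved in the present paper, so there is no in-paper proof to compare against; the question is whether your plan is itself a proof. It is not yet one, and you acknowledge this. Your space-bound observation is correct and clean: a family of ply at most two has a triangle-free intersection graph by the Helly property, it is chordal because it is an interval graph, a triangle-free chordal graph is a forest, and a forest on $k$ vertices has independence number at least $k/2$; hence $|H|\le 2\alpha(H)\le 2\alpha(\mathcal I)$. The balanced-search-tree bookkeeping also plausibly gives $O(\log\alpha(\mathcal I))$ per update. But the whole content of the theorem is the claim you label property~(ii), and there the plan stops short: the eviction rule is never pinned down (when ply becomes three at some point, which of the three intervals is discarded, and is there any secondary tie-break?), the invariant to be maintained is described only as ``a witness stored in $H$'' without saying what a witness is or how it transfers under eviction, and you explicitly defer the hardest case (``the moving target phenomenon'') rather than resolving it. A plan that says ``the rule must be chosen so that \dots the loss is at most a factor two'' is naming the goal, not proving it.

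Concretely, what is missing is an invariant strong enough that (a) it is preserved by your add/evict step and (b) at end of stream it implies $\alpha(H)\ge\alpha(\mathcal I)/2$. The reference you cite establishes exactly such an invariant: the stored set is maintained so that the offline leftmost-right-endpoint greedy run on the stored set alone is guaranteed to produce at least half as many intervals as it would on the entire prefix seen so far; their update rule is chosen precisely to preserve that relation, and the two-per-slot (equivalently ply-two-like) structure falls out of the invariant rather than being imposed first. So the architecture you propose --- small structured sample plus ``leftmost right endpoint'' tie-breaking --- is in the right spirit, but until you state a concrete invariant and prove it is maintained through an arbitrary eviction, the approximation factor of $2$ is asserted, not established, and your proposal has a genuine gap at its center.
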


Using Theorem~\ref{thm:2apxintvl} we obtain a constant-factor approximation for finding the largest independent set of
rectangles in a stream of axis-aligned unit height rectangles in one pass using $O(\alpha(\mathcal R))$ space.
The below notation is similar to the one used by Cabello and P\'erez-Lantero~\cite{cabelloIntervalSelection2017}.

We divide the $y$-axis into size two intervals.
Similar to~\cite{cabelloIntervalSelection2017} we define windows $W_\ell = [\ell,\ell+2j)$ for any $j \in\mathbb Z$.
Then, we form two partitions $\mathcal W_0$ and $\mathcal W_1$ of the $y$-axis as 
$\mathcal W_z = \{W_{z + 2i} \mid i \in\mathbb Z\}$ for $z \in \{0,1\}$.
We denote with $\mathcal R_z \subseteq \mathcal R$ and $z\in\{0,1\}$ the set of rectangles that
is contained in any window of $\mathcal W_z$.
Observe, that every rectangle is fully contained in only one of the two partitions.

Computing an independent set for the rectangles $\mathcal R_z$ now amasses to computing
independent sets for each set of rectangles lying in one window $w_\ell$ of $W_z$.
By only considering windows that contain at least one interval and using Theorem~\ref{thm:2apxintvl} 
we can compute for every $W_z$ and $z \in \{0,1\}$ a 
$2$-approximation of its largest independent set using $\alpha(\mathcal R_z)$ space in one pass.
Let $\alpha'(\mathcal R_z)$ be such a $2$-approximation, 
$\alpha(\mathcal R_z)$ the size of an optimal independent set of $\mathcal R_z$, and
$\mathcal R_I \subseteq\mathcal R$ an optimal independent set of $\mathcal R$,
then it holds that
\begin{align*}
    2\max\{\alpha'(\mathcal R_0),\alpha'(\mathcal R_1)\}
    & \geq \alpha'(\mathcal R_0) + \alpha'(\mathcal R_1)
    \geq \frac12 (\alpha(\mathcal R_0) + \alpha(\mathcal R_1)) \\
    & \geq \frac12 (|\mathcal R_I \cap \mathcal R_0| + |\mathcal R_I \cap \mathcal R_1|) 
    \geq \frac12 |\mathcal R_I| \geq \frac12 \alpha(\mathcal R).
\end{align*}

From this it follows that
    $\max\{\alpha'(\mathcal R_0),\alpha'(\mathcal R_1)\} \geq \frac14 \alpha(\mathcal R).$

\begin{theorem}
Let $\mathcal{R}$ be a set of axis-aligned unit height rectangles that arrive in a data stream,
there is an algorithm that compute a $4$-approximation to the maximum independent set of $\mathcal{R}$, 
uses $O(\alpha(\mathcal{R}))$ space, and 
handles each rectangle in polylog time.
\end{theorem}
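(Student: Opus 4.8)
The plan is to turn the decomposition sketched just before the statement into an actual one-pass algorithm: run the interval-selection algorithm of Theorem~\ref{thm:2apxintvl} independently inside each window, for both partitions $\mathcal W_0$ and $\mathcal W_1$ in parallel, and at the end report the better of the two partition-wise answers. Concretely, I would maintain for each $z\in\{0,1\}$ a dictionary (a balanced search tree) whose keys are the indices of the windows of $\mathcal W_z$ that have so far received at least one rectangle, and whose value for a key $\ell$ is a running instance of the algorithm of Theorem~\ref{thm:2apxintvl} that has been fed the $x$-projections of the rectangles contained in $W_\ell$. When a rectangle $r$ arrives, a constant number of arithmetic operations on its two $y$-coordinates identify the unique pair $(z,\ell)$ such that $r\subseteq W_\ell\in\mathcal W_z$ (as already observed, a unit-height rectangle is fully contained in a window of exactly one of the two partitions, and, the windows of a partition being disjoint, in exactly one window of it); we then look up $\ell$ in the $z$-th dictionary, creating a fresh instance if it is not yet present, and hand it the $x$-projection of $r$. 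After the stream ends we set $\alpha'(\mathcal R_z)$ to be the sum over all instances of partition $z$ of the sizes of the interval solutions they return, and output $\max\{\alpha'(\mathcal R_0),\alpha'(\mathcal R_1)\}$ together with the corresponding union of rectangle subsets.

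Correctness rests on two elementary facts. First, if two unit-height rectangles are both contained in a common height-two window $W_\ell$, then their vertical extents necessarily overlap, so they intersect if and only if their $x$-projections intersect; hence inside a single window the problem is literally interval selection, and the instance for $W_\ell$ returns a $2$-approximation of $\alpha(\mathcal R_{z,\ell})$, writing $\mathcal R_{z,\ell}\subseteq\mathcal R_z$ for the set of rectangles that land in $W_\ell$. Second, the windows of a fixed partition $\mathcal W_z$ are pairwise disjoint, so rectangles from different windows of $\mathcal W_z$ are automatically independent; therefore $\alpha(\mathcal R_z)=\sum_\ell\alpha(\mathcal R_{z,\ell})$, the union of the per-window solutions is a genuine independent set of $\mathcal R_z$, and its size satisfies $\alpha'(\mathcal R_z)=\sum_\ell\alpha'(\mathcal R_{z,\ell})\ge\tfrac12\sum_\ell\alpha(\mathcal R_{z,\ell})=\tfrac12\alpha(\mathcal R_z)$. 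Feeding this into the inequality chain already displayed above the statement (which uses that $\mathcal R_0$ and $\mathcal R_1$ partition $\mathcal R$, so $\alpha(\mathcal R_0)+\alpha(\mathcal R_1)\ge|\mathcal R_I\cap\mathcal R_0|+|\mathcal R_I\cap\mathcal R_1|=\alpha(\mathcal R)$) yields $\max\{\alpha'(\mathcal R_0),\alpha'(\mathcal R_1)\}\ge\tfrac14\alpha(\mathcal R)$, the claimed $4$-approximation.

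For the resource bounds: the number of non-empty windows in partition $z$ is at most $\alpha(\mathcal R_z)$, since each of them contributes at least one rectangle to an independent set of $\mathcal R_z$, so the two dictionaries together hold $O(\alpha(\mathcal R_0)+\alpha(\mathcal R_1))=O(\alpha(\mathcal R))$ keys; by Theorem~\ref{thm:2apxintvl} the instance for $W_\ell$ uses $O(\alpha(\mathcal R_{z,\ell}))$ space, and summing over $\ell$ gives $O\big(\sum_\ell\alpha(\mathcal R_{z,\ell})\big)=O(\alpha(\mathcal R_z))$, hence $O(\alpha(\mathcal R))$ in total. Each incoming rectangle costs $O(1)$ coordinate arithmetic, one dictionary operation in $O(\log\alpha(\mathcal R))$ time, and one interval-instance update in $O(\log\alpha(\mathcal R_{z,\ell}))=O(\log\alpha(\mathcal R))$ time by Theorem~\ref{thm:2apxintvl}, which is polylogarithmic. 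The point needing the most care is not any single computation but the bookkeeping that the dynamically growing collection of sub-instances stays within the space budget at every moment; this relies precisely on the second fact above, namely $\sum_\ell\alpha(\mathcal R_{z,\ell})=\alpha(\mathcal R_z)\le\alpha(\mathcal R)$, together with the fact that the algorithm of Theorem~\ref{thm:2apxintvl} is itself space-adaptive, so an instance that has so far seen few ``useful'' intervals occupies correspondingly little memory.
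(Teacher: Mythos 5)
Your proposal is correct and follows essentially the same approach as the paper: decompose the plane into the two shifted families of height-two horizontal windows, reduce the problem within each window to interval selection on the $x$-projections, apply Theorem~\ref{thm:2apxintvl} per window, sum per partition, and take the better of the two shifts via the same inequality chain. You supply more implementation detail than the paper does (the explicit dictionaries of sub-instances, the per-arrival routing and cost, and the argument that the total space stays within $O(\alpha(\mathcal{R}))$ throughout the stream), but the decomposition, the two key geometric observations, and the final $4$-approximation bound coincide with the paper's argument.
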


Note, that this algorithm restricted to axis-aligned squares matches the approximation factor of three due to Cormode et al.~\cite{cormodeIndependentSets2019}
since for unit intervals we can use the $\frac{3}{2}$-approximation algorithm from Cabello et al.~\cite{cabelloIntervalSelection2017}.

\section{Clique in Streams of Intervals and Segments}
\label{sec:clique}

We can make an identical statement as Theorem~\ref{thm:ISlowerbound} for maximum clique instead of \indset by observing the complement graph of the construction in Lemma~\ref{lemma:ISlowerboundPerm}.

\begin{figure}[t]
    \centering
    \includegraphics[page=2]{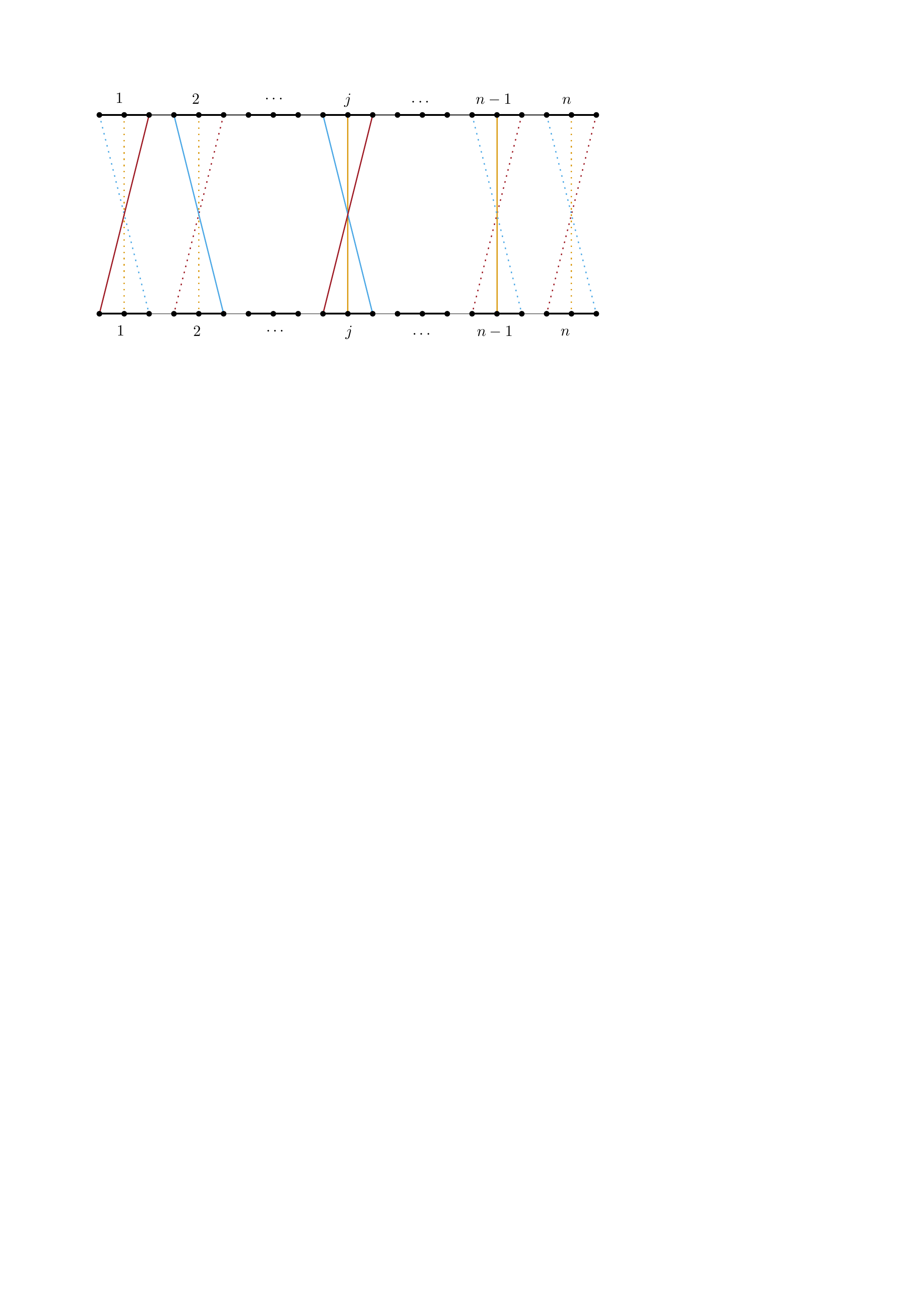}
    \caption{Lower bound for Clique in permutation graphs, with $t = 3$ players. In this example, $x^i_j = 1$ for all players $i\in[t]$.}
    \label{fig:Clique_Perm_lowerbound}
\end{figure}

\begin{theorem}\label{thm:Cliquelowerbound}
    Any constant-factor approximation algorithm for geometric maximum clique that succeeds with probability at least $3/4$ on segment streams using $p$ passes must use at least $\Omega(n/p)$ bits of memory, even when the endpoints of the segments lie on two lines.
\end{theorem}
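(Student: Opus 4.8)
The plan is to mirror the strategy used for Theorem~\ref{thm:ISlowerbound}, replacing the independent-set reduction of Lemma~\ref{lemma:ISlowerboundPerm} by its complement. First I would establish the clique analogue of Lemma~\ref{lemma:ISlowerboundPerm}: for any $p\geq 1$, $t\geq 2$, any $p$-pass algorithm that distinguishes a maximum clique of size $1$ from one of size $t$ on segment streams (with endpoints on two parallel lines) must use $\Omega(\frac{n}{p\cdot t\log t})$ bits. For this I reuse the multi-party set disjointness reduction of Theorem~\ref{thm:DisjLowerbound}, but now each player $i$, for each index $j$ with $x^i_j=1$, inserts a segment so that the resulting intersection graph is the \emph{complement} of the permutation graph from Lemma~\ref{lemma:ISlowerboundPerm}. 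Concretely, as in Figure~\ref{fig:Clique_Perm_lowerbound}, the segments within a single group $j$ should be made pairwise \emph{crossing} (so they form a clique), while segments from different groups should be pairwise \emph{non-crossing} (so they are independent). This is achieved by nesting the groups: group $j+1$'s points on the top line lie strictly inside the span of group $j$'s points, and likewise on the bottom line, while inside a group the $t$ segments are drawn with crossing slopes. Since this is a permutation-type layout, the endpoints still lie on two parallel lines.

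Next I would verify the key combinatorial claim: the intersection graph contains a clique of size $t$ if and only if $\textsc{Disj}_{n,t}(x^1,\dots,x^t)=1$, and otherwise $\omega=1$. Because segments from distinct groups are disjoint, any clique lives inside a single group $j$, and its size equals the number of players with $x^i_j=1$; by Property~\ref{property:1ortoverlap} this is $0$, $1$, or $t$, and by Property~\ref{property:leq1overlap} a clique of size $t$ occurs exactly when some index $j$ has $x^i_j=1$ for all $i$, i.e.\ when the disjointness instance answers $1$. The streaming-to-communication simulation is identical to Lemma~\ref{lemma:ISlowerboundPerm}: player $i$ feeds its segments, passes the memory state on, and the total memory times $p$ lower-bounds the communication, which is $\Omega(\frac{n}{t\log t})$ by Theorem~\ref{thm:DisjLowerbound}; with $n'=n/2$ vertices (Property~\ref{property:n/2t1s}) this gives the stated bound.

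Finally, to get Theorem~\ref{thm:Cliquelowerbound} I would argue exactly as in Theorem~\ref{thm:ISlowerbound}: a constant-factor approximation for geometric maximum clique can, for some constant $c\geq 2$, distinguish $\omega=1$ from $\omega=c$; applying the clique lemma with $t=c$ players yields an $\Omega(\frac{n}{p\cdot c\log c})=\Omega(n/p)$ memory lower bound, since $c$ is a constant. I expect the only real work — and the one place requiring care — to be the explicit geometric layout that simultaneously makes intra-group segments crossing and inter-group segments nested while keeping all endpoints on two parallel lines; once that layout is fixed, the combinatorial claim and the communication reduction are routine adaptations of the independent-set case.
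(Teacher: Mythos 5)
Your high-level strategy — complement the permutation graph from Lemma~\ref{lemma:ISlowerboundPerm}, establish a clique analogue of that lemma via the same multi-party set-disjointness simulation, then invoke the argument of Theorem~\ref{thm:ISlowerbound} with $t=c$ — is exactly the paper's approach, and the combinatorial and communication parts of your argument are correct. The gap is in the geometric layout you propose, and it is not a minor detail: \emph{nesting} the groups does not produce the complement you need. If group $j$'s segments pairwise cross, they form an X-shaped fan between the two lines; placing group $j+1$'s points strictly inside the span of group $j$'s points (on both lines) puts group $j+1$'s fan inside that X, and the outermost segment of group $j$ will cross the outermost segment of group $j+1$. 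Concretely, with $t=2$: group $1$ has top/bottom points at $\{0,10\}$ and segments $(0\!\to\!10)$ and $(10\!\to\!0)$; group $2$ has points at $\{3,7\}$ and segments $(3\!\to\!7)$ and $(7\!\to\!3)$; then $(0\!\to\!10)$ crosses $(3\!\to\!7)$, so the cross-group independence you claim fails.

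The layout that works, and the one the paper actually uses, is the opposite of nesting: place the $n$ groups \emph{side by side} in disjoint intervals on both lines (groups $1,\ldots,n$ left-to-right on top and bottom alike), and reverse the within-group order on the bottom so that the $i$-th top point of group $j$ connects to the $(t-i+1)$-th bottom point of group $j$. Then within a group the $t$ segments pairwise cross, while any two segments from distinct groups $j<j'$ have both their top and bottom endpoints in that same order and hence are disjoint. This is precisely what the paper describes as "reversing the permutation (exactly mirroring the bottom)" of the Lemma~\ref{lemma:ISlowerboundPerm} construction. Once you substitute this side-by-side layout for your nested one, the rest of your argument goes through verbatim.
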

\begin{proof}
    The complement graph of the construction of Lemma~\ref{lemma:ISlowerboundPerm} is also a permutation graph and admits the property that it contains either a clique of size $1$ or $t$. It is given by reversing the permutation (exactly mirroring the bottom) of the construction in Lemma~\ref{lemma:ISlowerboundPerm}. This construction is illustrated in Figure~\ref{fig:Clique_Perm_lowerbound}. It follows that Lemma~\ref{lemma:ISlowerboundPerm} also holds for geometric maximum clique instead of \geomindset. The theorem now follows from the proof of Theorem~\ref{thm:ISlowerbound}, using maximum clique instead of \indset.
\end{proof}

%
%
For streams of intervals, we show a simple upper bound, using that there are at most $2n$ different endpoints of intervals.

\begin{theorem}\label{thm:CliqueUpperBoundInterval}
    Let $\mathcal{I}$ be a set of intervals in the real line that arrive in a data stream. There is an algorithm to compute the largest clique size, $\omega(\mathcal{I})$, in 1 pass using $O(n \log(\omega(\mathcal{I}))$ bits of memory, using time $O(n^2)$ total. In a second pass, the intervals that make up the clique can be recovered, which can be streamed without extra memory use, or stored using $O(\omega(\mathcal{I}) \log n)$ bits of memory.
\end{theorem}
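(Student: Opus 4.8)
The idea is a classic sweep-line counting argument made to fit the streaming bound. First I would observe that a set of intervals forms a clique exactly when they share a common point, so by the Helly property in one dimension a clique corresponds to a point $x$ contained in all intervals of the clique; equivalently, $\omega(\mathcal I)$ is the maximum, over all points $x$ on the line, of the number of intervals containing $x$. Moreover this maximum is attained at one of the (at most $2n$) endpoints appearing in the stream, in fact at a startpoint of some interval. So the plan is to maintain, for each startpoint seen so far, a counter of how many intervals contain it, and to output the maximum counter at the end.

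\textbf{Key steps.} (1) Reduce to counting: $\omega(\mathcal I) = \max_{x} |\{I \in \mathcal I : x \in I\}|$, and this is achieved at an interval startpoint. (2) Streaming implementation: we cannot know all startpoints in advance, so I would describe a two-phase-within-one-pass bookkeeping. When interval $I = [a,b]$ arrives, create a new counter keyed by $a$, initialized to the number of \emph{previously stored} startpoints $a'$ with $a' \le a \le b'$-style containment — but this requires knowing which earlier intervals contain $a$. The cleaner route: store the whole set of distinct startpoints as they arrive together with a counter per startpoint; when $[a,b]$ arrives, increment the counter of every stored startpoint lying in $[a,b]$ and then insert $a$ with its counter set to $1$ plus the number of already-stored intervals whose span contains $a$. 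To support the latter I also keep, for each stored startpoint, the associated endpoint. Each arriving interval thus triggers $O(n)$ counter updates, giving $O(n^2)$ total time. (3) Memory accounting: there are at most $2n$ relevant endpoints, but we only ever store one key per distinct startpoint, so at most $n$ counters, each bounded by $\omega(\mathcal I)$, hence $O(n\log\omega(\mathcal I))$ bits; the stored startpoint/endpoint coordinates are subsumed or can be streamed-for-recovery in a second pass. (4) Recovery pass: once $\omega(\mathcal I)$ and an optimal witness point $x^\star$ are known, a second pass simply re-streams every interval, tests whether it contains $x^\star$, and emits it; this needs no extra memory if we stream the output, or $O(\omega(\mathcal I)\log n)$ bits if we store the clique.

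\textbf{Main obstacle.} The delicate point is bounding the memory at $O(n\log\omega(\mathcal I))$ rather than the naive $O(n\log n)$: we must avoid storing raw coordinates wherever possible, and argue that the number of distinct keys we ever hold is at most the number of intervals, while each counter value never exceeds $\omega(\mathcal I)$ and so costs only $\log\omega(\mathcal I)$ bits. A second subtlety is correctness of the incremental update — I need the invariant that after processing the first $k$ intervals, the counter at stored startpoint $a$ equals exactly the number of those $k$ intervals containing $a$; this holds because every interval either contributes at creation time of a later key or is caught by the "increment all stored startpoints inside $[a,b]$" step when it arrives, and the startpoint is created at a single moment so no double counting occurs. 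I would also note that restricting attention to startpoints (not all endpoints) is what keeps the key count at $n$ rather than $2n$, and is justified since sliding $x$ leftward to the nearest startpoint can only increase or maintain the containment count.
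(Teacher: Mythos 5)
Your reduction to counting (Helly in one dimension, the maximum is attained at an interval endpoint) matches the paper's, and the second-pass recovery step is also the same. But there is a real gap in the memory accounting, and you flag it yourself without resolving it.

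Your first-pass data structure stores, for each seen startpoint, the coordinate $a$, an associated endpoint $b$, and a counter. To route an increment when a new interval $[a',b']$ arrives you must compare $a'$ and $b'$ against the \emph{stored coordinates}, so the coordinates cannot be discarded or ``streamed-for-recovery''; they are needed online throughout the first pass. Storing $\Theta(n)$ coordinate values costs $\Theta(n\log n)$ bits (each coordinate needs $\Omega(\log n)$ bits just to be distinguishable), which dominates the claimed $O(n\log\omega(\mathcal I))$ bound since $\omega(\mathcal I)\le n$. Restricting to startpoints only saves a constant factor in the number of keys and does nothing about the per-key $\log n$ cost. So as written the bound you state is not achieved.

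The paper sidesteps this entirely by making the (standard, and stated elsewhere in the paper for rectangles) assumption that all endpoints are integers in $\{1,\ldots,2n\}$. Under that assumption one preallocates $2n$ counters, one per possible coordinate, kept in a fixed order and hence \emph{implicitly} indexed by position, so no coordinate labels are stored at all. Each arriving interval triggers up to $2n$ increments ($O(n^2)$ total time), and each counter is bounded by $\omega(\mathcal I)$, giving $O(n\log\omega(\mathcal I))$ bits. To repair your proof you would either need to state and use this bounded-universe assumption (at which point the dynamic key insertion you describe is unnecessary and the paper's static array is simpler), or find some other way to avoid paying $\log n$ bits per stored key, which your current argument does not provide.
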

\begin{proof}
    We keep a counter for every possible endpoint of an interval, which are $2n$ counters total. We keep the order of counters fixed, but need no labels for a counter, because of the assumption that the range of endpoints is $1,\ldots,2n$. When an interval appears in the stream, we increment all counters that are contained in the interval, including its endpoints. At the end of the stream, $\omega(\mathcal{I})$ is given by the largest counter, as this coordinate is a witness to $\omega(\mathcal{I})$ intervals co-intersecting. This is the correct maximum, as the number of intersecting intervals can only change at an endpoint of an interval.
    
    In the second pass, we can recover the intervals that make up the clique can be recovered by pushing every interval that overlaps the coordinate of the maximum counter found in the first pass to the output.
\end{proof}

The result of Theorem~\ref{thm:CliqueUpperBoundInterval} is nearly tight, as the construction of Theorem~\ref{thm:Cliquelowerbound} can also be constructed as a stream of unit intervals.



\section{Conclusion}
\label{sec:conclusion}
We studied the geometric independent set and clique problems for a variety of geometric objects.
Interestingly, we showed that the type of geometric object used for the implicit stream of a 
geometric intersection graph can make a substantial difference even for simple objects like segments and intervals.
This raises the question if such a difference also exists for other types of objects.
Moreover, the complexity of finding an independent set in a stream of arbitrary rectangles remains open.
Finally, studying streams of geometric objects in other streaming models, such as turnstile streams, provides an interesting direction for future research.

\bibliographystyle{plainurl}
\bibliography{ref}







\end{document}